\newtheorem{definition} {{Definition}}
\newtheorem{example} {{Example}}
\newtheorem{theorem} {{Theorem}}
\newtheorem{proposition} {{Proposition}}
\newtheorem{corollary} {{Corollary}}
\newtheorem{proof} {{Proof}}
\newcommand{\cP}{{\cal P}}
\newcommand{\sP}{\cP}
\newcommand{\Ps}{\smash{{\sP\kern-2.0pt}_q\kern-0.5pt(n)}}
\newcommand{\Gr}{\mathcal{G}_q(k,n)}
\DeclareMathOperator{\im}{im}
\newcommand{\gaussmset}[2]{\left[\begin{smallmatrix}{#1}\\{#2}\end{smallmatrix}\right]}
\newcommand{\gaussmnum}[3]{\gaussmset{#1}{#2}_{#3}}
\title{Construction of Constant Dimension Code from Two Parallel Versions of Linkage Construction}
\author{Xianmang He}
\begin{document}
\maketitle
\begin{abstract}
The linkage construction and its generalization  is one of the most powerful constructions for constant dimension code, accounting for approximately 50\% of all the listed parameters.
We show how to improve the linkage  construction of subspace codes by two parallel versions of the linkage construction. 
This proof allows us to attain codes of larger size for a given minimum distance, which exceeds the latest improvements 
on the linkage construction \cite{combining2019Antonio} in the cases $A_q(13,4,4),A_q(17,4,4), A_q(19,6,6)$.

\textbf{keywords:} Linkage construction, Constant dimension codes, Lifted MRD code
\end{abstract}

\section{Introduction}

Let $q>1$  be a prime power, $\mathbb{F}_q$ the field with $q$ elements.  Let $V \cong \mathbb{F}_q^n$ be a $n$-dimensional vector space over the finite field $\mathbb{F}_q$.
We denote the set of all $k$-dimensional subspaces in $V$ with $\Gr$. Its cardinality can be calculated by the $q$-binomial 
coefficient $\gaussmnum{n}{k}{q}=\prod_{i=0}^{k-1}\frac{q^n-1}{q^k-1}$ for $0 \le k \le n$ and $0$ otherwise.
R. K\"otter and F. R. Kschischang \cite{koetter2008coding} have proven that the set $V$ endowed with the distance $d$ defined by
\begin{align*}
\forall U,W \in V, d(U, W) &= \dim(U + W) - \dim(U\cap W)\\
&= 2 \dim(U+W)  - \dim(U)-\dim(W), 
\end{align*}
is a metric space.

Subspace coding, equipping with this metric space,  was  first applied for error control and correction  in random linear network coding by the pioneering work \cite{koetter2008coding}. 
Specially,  an $(n,M,d,k)_q$ constant dimension code (CDC) $C$ is a subset of $\Gr$  with the size $M$ in which for all $W \ne U \in C$, we always have $d_S(U,W)\ge d$.
In other words, the subspace distance is lower bounded by $d$: $d_S(C)\ge d$.
The main problem of constant dimension subspace coding is to explore the maximum value $M$ of a $(n,M,d,k)_q$ code under the fixed parameters $q$, $n$, $d$, and $k$.
In general, the maximum possible size $M$ is often denoted by $A_q(n,d,k)$.

The most powerful construction is the linkage construction \cite{Heide2016linkage} and its generalization\cite{heinlein2017asymptotic}.
In 2016, Heide and Carolyn \cite{Heide2016linkage} proposed  a construction coming from Corollary 39 in \cite {silberstein2014subspace}.  
Later, an improved linkage construction was presented, and the following lower bound was found:
$A_q(n,d,k)\ge A_q(m,d,k)q^{max\{n-m,k\}(min\{n-m,k\}-\frac{d}{2}+1)}+A_q(n-m+k-\frac{2}{d},d,k).$
Fagang Li~\cite{FagangLi2019Construction} combined the linkage construction and echelon-Ferrers to obtain new lower bounds for constant-dimension codes and
 improved the linkage construction in some cases. 
 Sascha Kurz\cite{Kurz2019}  generalized the linkage construction, and this comes at the cost of introducing a new notation $B_q(v_1,v_2,d,k)$.
Recently, Sascha Kurz  gave some algorithmic results in  the  cases $d=4$ \cite{lifted2020kurz}.

In this paper, we present  a new construction for constant dimension  codes from two parallel versions linkage construction. 
Some new constant-dimension dimension subspace codes of larger size and the expression of these bounds  are also given.

\section{Previous Known Results}
Generally, the exact value of $A_q(n,d,k)$ is a hard problem both algorithmically and theoretically.  Even in the case when the parameters are relatively small. 
As yet, there are only three non-trivial cases of constant dimension codes that the maximum number of codewords have been determined.
They are $A_2(6,4,3)$=77\cite{Honold2013Optimal}, $A_2(8,6,4)=257$\cite{Heinlein2017Classifying} and  $A_2(13,4,3)$=1597245\cite{etzion2013problems}, 
while other non-trivial parameters need further exploration.
A plethora of  results on the construction of CDCs are invented in the literatures.  The upper and lower  bounds on $A_q(n,d,k)$ have been in-depth investigated in the last decade.
 The report \cite{Heinlein2016Tables}  depicts  an on-line database, to which we refer the online tables in the website http://subspacecodes.uni-bayreuth.de.
 Tables list the cases  including  $q\in \{2,3,4,5,7,8,9\}$ and $n$ varies from 4 to 19,  gathering the state-of-art information about the known upper and lower bounds for  constant dimension subspace codes.

New subspace codes from two parallel versions of lifted maximum rank distance codes were introduced  by Xu and Chen\cite{XuChen}. 
This construction yields a  lot of follow-up work  \cite{combining2019Antonio,HaoXJL2019,generalized2019heinlein}. 
Among them, \cite{combining2019Antonio} explores several approaches to combine subspace codes with $k$-spread, which improve on the lower bounds for CDCs for many cases, including $A_q(12,4,4)$, $A_q(12,6,6)$ and $A_q(16,4,4)$, etc.
Geometric concepts such as the Veronese variety and the Segre variety  were also applied to get some new lower bounds in the cases $A_q(2n,4,n)$ \cite{Cossidente2014Subspace}.

Another outstanding construction for constant dimension code uses maximum rank distance (MRD) codes, see the section \ref{sec-liftmrd}. The expurgation-augmentation method was invented by Thomas Honold \cite{Honold2013Optimal},
which starts from a lifted MRD code and then adds and removes some special codewords. With this,  Thomas Honold etc. finally determined the maximum size $A_2(6,4,3)$ to 77. 
The echelon-Ferrers construction \cite{etzion2009error} is a good method among the subspace distance, the rank distance and the Hamming distance, and it is suitable for construction under various parameters.
A greedy-type algorithm for echelon-Ferrers construction has been proposed by Alexander Shishkin, see \cite{Shishkin2014,shishkin2014cardinality}.   
In \cite{gabidulin2011new,multicomponent} the authors studied the block designs to improve the echelon-Ferrers construction.

\section{Construction}

\subsection{Lifted MRD code}\label{sec-liftmrd}
A linear rank metric code $[m \times n, M, d]_q$ is a subspace $C$ of the vector space of $m \times n$ matrices over $\mathbb{F}_q$, i.e., $\mathbb{F}_q^{m \times n}$, of the size $M$, for which the distance of each pair of 
elements is lower bounded through the rank metric $d_r(A,B) = rank(A-B)$. For all suitable parameters, $m, n, d>0$ and prime power $q$, there exists a linear rank metric code 
that reaches the maximum size of $\left\lceil q^{\max\{m,n\}(\min\{m,n\}-d+1)} \right\rceil$. We denote this with $Q_q(m,n,d)$.

The lifted MRD (LMRD) code \cite{silva2008rank} is a $(n,M,d,k)_q$ CDC $C$ that uses an identity matrix $I_k$(with the size $k \times k$) as the MRD code $Q_q(k,n-k,\frac{d}{2})$ prefix, 
which implies $M=q^{max\{(n-k),k\}(min\{n-k,k\}-\frac{d}{2}+1)}$: $C=\{ \operatorname{rowspan}(I_k \mid A) : A \in Q_q(k,n-k,\frac{d}{2}) \}$, 
where ``$\mid$'' denotes the horizontal concatenation of two matrices having the same number of rows.

\subsection{Two parallel versions of linkage construction}\label{linkage_construction}

In this section, we first look back some basic definitions in linkage construction \cite{Heide2016linkage}. 

\begin{definition}[\cite{Heide2016linkage}]
A set ${\bf U} \subset \mathbb{F}_q^{k \times n}$ with the size $k \times n$ matrices over $\mathbb{F}_q$ is called a SC-representation of 
a set of $k$ dimensional subspaces in $\mathbb{F}_q^n$ such that for all $U \in {\bf U}$,  $rank(U)=k$,    and for all $U_1 \neq U_2$ in ${\bf U}$, we have $Im(U_1)\neq Im(U_2)$ . 
\end{definition}

Here $Im(U)$ is the $k$ dimensional subspace spanned by $k$ rows of $U$.\\

\begin{proposition}(see \cite{Heide2016linkage}). {\em  Let ${\bf U}$ be a SC-representation of a $(n_1, N_1, d_1, k)_q$ constant dimension code 
and ${\bf Q} \subset Q_q(n_2,k,d_2)$  be a code with $N_2$ elements and rank distance $d_2$. 
Note that  the set of $k$ dimensional subspaces in ${\bf F}_q^{n_1+n_2}$ defined by ${\bf W_1}=\{Im(U_{11}|Q_{12}): U_{11} \in {\bf U}, Q_{12} \in {\bf Q}\}$. This is a $(n_1+n_2, N_1N_2, \min\{d_1,2d_2\}, k)_q$ constant dimension code. Here $(U|Q)$ is a $k \times (n_1+n_2)$ matrix concatenated from $U$ and $Q$.} \label{proposition-linkage} \\
\end{proposition}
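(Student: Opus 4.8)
The plan is to verify the three defining properties of a constant dimension code in turn: that every codeword is a genuine $k$-dimensional subspace, that the construction produces exactly $N_1N_2$ distinct subspaces, and that the minimum subspace distance is at least $\min\{d_1,2d_2\}$. Throughout I would exploit the standard identity that for two $k\times(n_1+n_2)$ matrices $A,B$ of full row rank, $d_S(\operatorname{Im}(A),\operatorname{Im}(B))=2\rank\!\left(\begin{smallmatrix}A\\B\end{smallmatrix}\right)-2k$, which follows from $\dim(\operatorname{Im}(A)+\operatorname{Im}(B))=\rank\!\left(\begin{smallmatrix}A\\B\end{smallmatrix}\right)$ combined with the metric formula recalled in the introduction. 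For well-definedness, since each $U_{11}\in\mathbf{U}$ already has rank $k$ by the definition of an SC-representation, appending the columns of $Q_{12}$ cannot lower the rank, so $(U_{11}\mid Q_{12})$ has rank $k$ and $\operatorname{Im}(U_{11}\mid Q_{12})$ is indeed $k$-dimensional in $\mathbb{F}_q^{n_1+n_2}$.

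For the cardinality I would show that the map $(U_{11},Q_{12})\mapsto\operatorname{Im}(U_{11}\mid Q_{12})$ is injective. If two such images coincide, then because both matrices have full row rank there is an invertible $T\in\mathbb{F}_q^{k\times k}$ with $(U_{11}'\mid Q_{12}')=T(U_{11}\mid Q_{12})$. Comparing the first $n_1$ columns gives $\operatorname{Im}(U_{11}')=\operatorname{Im}(TU_{11})=\operatorname{Im}(U_{11})$, and the SC-representation property (distinct elements have distinct images) forces $U_{11}'=U_{11}$. Then $(T-I)U_{11}=0$ with $U_{11}$ of full row rank forces $T=I$, hence $Q_{12}'=Q_{12}$; thus the $N_1N_2$ pairs yield $N_1N_2$ distinct subspaces.

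The core of the argument is the distance bound, which I would split into two cases for distinct codewords $W=\operatorname{Im}(U_{11}\mid Q_{12})$ and $W'=\operatorname{Im}(U_{11}'\mid Q_{12}')$. In the case $U_{11}=U_{11}'$ (so $Q_{12}\neq Q_{12}'$), subtracting the top block row from the bottom in the stacked matrix reduces it to $\left(\begin{smallmatrix}U_{11}&Q_{12}\\0&Q_{12}'-Q_{12}\end{smallmatrix}\right)$; since $U_{11}$ has full row rank, no nontrivial combination of the top rows vanishes on the first $n_1$ coordinates, so the top block and the residual block $Q_{12}'-Q_{12}$ contribute independent rank, giving total rank $k+\rank(Q_{12}'-Q_{12})\geq k+d_2$ and thus $d_S(W,W')\geq 2d_2$. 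In the case $U_{11}\neq U_{11}'$, deleting the last $n_2$ columns cannot raise the rank, so the stacked rank is at least $\rank\!\left(\begin{smallmatrix}U_{11}\\U_{11}'\end{smallmatrix}\right)=\dim(\operatorname{Im}(U_{11})+\operatorname{Im}(U_{11}'))\geq k+d_1/2$, where the last inequality restates $d_S(\operatorname{Im}(U_{11}),\operatorname{Im}(U_{11}'))\geq d_1$ coming from $\mathbf{U}$ being a $(n_1,N_1,d_1,k)_q$ code; this yields $d_S(W,W')\geq d_1$. Combining the two cases gives the claimed minimum distance $\min\{d_1,2d_2\}$.

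I expect the main obstacle to be the rank bookkeeping in the first case: one must argue carefully that the full-row-rank prefix $U_{11}$ keeps the contribution of the residual $Q_{12}'-Q_{12}$ fully independent of the top block, rather than merely invoking subadditivity of rank, since it is precisely this independence that sharpens the bound from $d_2$ up to the required $2d_2$. The cardinality step is a routine consequence of the SC-representation hypothesis, and the second distance case reduces transparently to the distance guarantee already carried by $\mathbf{U}$.
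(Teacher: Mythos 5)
Your proof is correct, and it follows the same route as the standard argument for the linkage construction: the paper itself only cites this proposition from the Gluesing-Luerssen--Troha reference without reproving it, but your case split (equal prefixes giving rank $k+\rank(Q_{12}'-Q_{12})\ge k+d_2$ after block row reduction, distinct prefixes giving rank at least $k+d_1/2$ by deleting the last $n_2$ columns) is exactly the argument used there and mirrors the row-reduction technique the paper applies in its own Theorem 3. The injectivity step via the invertible change-of-basis matrix $T$ is also the standard one, so nothing further is needed.
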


Similar to code $W_1$, we have another ($n_1+n_2, N_3N_4, \min\{d_1,2d_2\},k)_q$ constant dimension code $W_2=\{Im(Q_{21}|U_{22}): Q_{21} \in {\bf Q} , U_{22} \in \bf{U} \}$, where
$Q \subset Q_q(n_1,k,d_2)$  be a MRD code with rank distance $d_2$ and $N_3$ elements, ${\bf U}$ be a SC-representation of a $(n_2, N_4, d_1, k)_q$ constant dimension subspace code.

Now the problem is how many different subspaces we can take from these two parallel versions of linkage construction so as to preserve
the subspace distance $d$.

\subsection{Delsarte Theorem}

The rank distribution of a code ${\bf Q}$ in $Q_q(m,n,d)(m\ge n)$ is defined by $A_r({Q_q(m,n,d)})=|\{Q \subseteq {Q_q(m,n,d)}, rank(Q)=r\}|$ for $r \in  [n,m]$ (see \cite{Delsarte1978Bilinear,Cruz2015Rank}).
The rank distribution of a MRD code is completely  determined by its parameters. 
Such result can be referred to  Theorem 5.6 in \cite{Delsarte1978Bilinear} or Corollary 26 in \cite{Cruz2015Rank}. 
The Delsarte Theorem is used to calculate the final result in this paper.

\begin{theorem}\label{them-delsarte}
{\bf(Delsarte 1978)} {\em Assume that ${\bf Q} \subseteq {Q_q(m,n,d)} $ ($m\ge n$)is a MRD code with rank distance $d$, then its rank distribution is given by 
$$A_r(Q_q(m,n, d))=\displaystyle{n \choose r}_q \Sigma_{i=0}^{r-d} (-1)^i q^{\displaystyle{i \choose 2}} \displaystyle{r \choose i}_q (\frac{q^{m(n-d+1)}}{q^{m(n+i-r)}}-1),$$ 
where $d \le r \le n$, $A_r(Q_q(m,n,d))$ denotes the cardinality of code $Q_q(m,n,d)$ with rank $r$.}
\end{theorem}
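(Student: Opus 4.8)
The plan is to compute the rank distribution by a ``cumulative, then invert'' strategy that uses only the two defining properties of an MRD code: its minimum rank distance is $d$ and its cardinality meets the Singleton bound, $|{\bf Q}|=q^{m(n-d+1)}$ with $m\ge n$. I view ${\bf Q}$ as a set of $m\times n$ matrices, so each codeword $c$ has row space $\mathrm{row}(c)\subseteq\mathbb{F}_q^{n}$ with $\dim\mathrm{row}(c)=\mathrm{rank}(c)$. For a subspace $W\subseteq\mathbb{F}_q^{n}$ of dimension $w$, I first count $N_w:=|\{c\in{\bf Q}:\mathrm{row}(c)\subseteq W\}|$, the number of codewords whose row space is confined to $W$; this is a cumulative version of the quantity I ultimately want. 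The desired count $A_r$ of codewords of rank exactly $r$ is then recovered by a $q$-analogue of Möbius inversion over the subspace lattice, followed by multiplication by the number $\binom{n}{r}_q$ of $r$-dimensional row spaces.

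The crux is the following lemma: for every $w$-dimensional $W$, $N_w=q^{\max\{0,\,m(w-d+1)\}}$, depending only on $w$. After choosing coordinates so that $W$ is spanned by the first $w$ standard basis vectors, the codewords counted by $N_w$ are exactly those vanishing on the last $n-w$ columns; deleting these zero columns exhibits them as a rank-metric code of $m\times w$ matrices whose nonzero members still have rank $\ge d$. The Singleton bound for this shortened code gives $N_w\le q^{m(w-d+1)}$ when $w\ge d-1$. For the matching lower bound I apply rank--nullity to the projection $\pi:{\bf Q}\to\mathbb{F}_q^{m\times(n-w)}$ onto the deleted columns: its kernel is precisely the set counted by $N_w$, so $N_w=|{\bf Q}|/|\mathrm{im}\,\pi|\ge q^{m(n-d+1)}/q^{m(n-w)}=q^{m(w-d+1)}$. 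Hence equality holds, while for $w\le d-1$ every such codeword has rank $<d$ and is therefore $0$, giving $N_w=1$. I expect this lemma (that shortening an MRD code again yields an MRD code), together with keeping the $m\ge n$ row/column conventions straight, to be the main obstacle.

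Since $N_w$ depends only on $\dim W$, writing $e(U)$ for the number of codewords with row space exactly $U$, the relation $N_{\dim W}=\sum_{U\subseteq W}e(U)$ and the subspace-lattice Möbius function $\mu(U,W)=(-1)^{w-\dim U}q^{\binom{w-\dim U}{2}}$ give, after inversion and grouping by $\dim U$,
$$E_r:=e(U)=\sum_{i=0}^{r}(-1)^i\,q^{\binom{i}{2}}\binom{r}{i}_q N_{r-i},$$
which is independent of the chosen $r$-dimensional $U$, so that $A_r=\binom{n}{r}_q E_r$. Substituting $N_{r-i}=q^{m(r-i-d+1)}$ for $i\le r-d$ and $N_{r-i}=1$ for $i>r-d$ splits the sum, and the $q$-binomial identity $\sum_{i=0}^{r}(-1)^i q^{\binom{i}{2}}\binom{r}{i}_q=\prod_{j=0}^{r-1}(1-q^{j})=0$ (valid for $r\ge 1$, hence throughout $d\le r\le n$) lets me rewrite the constant tail contribution of the terms $i>r-d$ as $-\sum_{i=0}^{r-d}(-1)^i q^{\binom{i}{2}}\binom{r}{i}_q$. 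This folds the two pieces into the single factor $q^{m(r-i-d+1)}-1=\frac{q^{m(n-d+1)}}{q^{m(n+i-r)}}-1$, reproducing exactly the stated formula. As a cross-check, Delsarte's original derivation obtains the same distribution as the MacWilliams transform of the trivial dual distance distribution in the bilinear forms association scheme, but the elementary shortening argument above keeps the proof self-contained.
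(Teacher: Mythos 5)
Your argument is correct, but note that the paper does not actually prove Theorem~\ref{them-delsarte}: it is quoted as a classical result, with the reader referred to Theorem~5.6 of \cite{Delsarte1978Bilinear} or Corollary~26 of \cite{Cruz2015Rank}. What you have written is a self-contained proof by the standard ``shortening plus M\"obius inversion'' route, which is essentially the modern treatment in the second cited reference rather than Delsarte's original derivation via the MacWilliams transform in the association scheme of bilinear forms. The key lemma is sound: for a $w$-dimensional $W\subseteq\mathbb{F}_q^n$, the upper bound $N_w\le q^{m(w-d+1)}$ follows from the Singleton bound applied to the shortened code in $\mathbb{F}_q^{m\times w}$ (using $m\ge n\ge w$ so that $\max\{m,w\}=m$), the matching lower bound follows from $N_w=|{\bf Q}|/|\mathrm{im}\,\pi|\ge q^{m(n-d+1)}/q^{m(n-w)}$, and the inversion with $\mu(U,W)=(-1)^{w-u}q^{\binom{w-u}{2}}$ together with $\sum_{i=0}^{r}(-1)^iq^{\binom{i}{2}}\binom{r}{i}_q=\prod_{j=0}^{r-1}(1-q^j)=0$ for $r\ge1$ reproduces the stated formula exactly. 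The one hypothesis you should make explicit is linearity of ${\bf Q}$: both the quotient count $|{\bf Q}|/|\mathrm{im}\,\pi|$ and the identification of $N_w$ with the kernel of a linear projection require ${\bf Q}$ to be a subspace of $\mathbb{F}_q^{m\times n}$. This is consistent with the paper's definition of a linear rank-metric code in Section~\ref{sec-liftmrd}, but Delsarte's theorem as usually stated also covers nonlinear MRD codes, for which your shortening argument would not apply and the duality proof is needed; your closing remark acknowledges this, so the gap is one of stated scope rather than of correctness.
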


\begin{example}\label{exam-delsate}
 Suppose that $m=10, n=4, q=2, d=2$, we have $|Q_2(10,4,2)|=2^{30}$, and $A_2(Q_2(10,4,2))=35805, A_3(Q_2(10,4,2))=15621210,A_4(Q_2(10,4,2))=1058084809$. 
 Similarly, $m=8, n=4, d=2, q=2$, then $|Q_2(8,4,2)|=2^{24}$, and $A_2(Q_2(8,4,2))=8925, A_3(Q_2(8,4,2)))=956250, A_4(Q_2(8,4,2))=15812040$.
\end{example}

\subsection{A new lower bound for $A_q(n_1+n_2, d, k)$}
In this section, we give our main construction in the following theorem \ref{theo-main} and theorem \ref{theo-main-2}.

\begin{theorem}\label{theo-main}
{\em Let ${\bf U}$ and ${\bf V}$ be two SC-representations of   $(n_1, N_1, d, k)_q$ and $(n_2, N_3, d, k)_q$ constant dimension codes, respectively. 
Let ${\bf Q}_1 \subset Q_q(n_2,k,\frac{d}{2})$ be a code with $N_2$ elements and rank distance $\frac{d}{2}$. 
Let ${\bf Q}_2 \subset Q_q(n_1,k,\frac{d}{2})$ be a code with $N_4$ elements and rank distance $\frac{d}{2}$  such that the rank of each element in ${\bf Q}_2$ is at most $k-\frac{d}{2}$. 
Then we have a $(n_1+n_2, N_1N_2+N_3N_4, d, k)_q$ constant subspace code.}\\
\end{theorem}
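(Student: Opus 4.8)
The plan is to take as our code the union $C = {\bf W}_1 \cup {\bf W}_2$ of the two parallel linkage codes ${\bf W}_1 = \{\operatorname{Im}(U \mid Q_1) : U \in {\bf U},\, Q_1 \in {\bf Q}_1\}$ and ${\bf W}_2 = \{\operatorname{Im}(Q_2 \mid V) : Q_2 \in {\bf Q}_2,\, V \in {\bf V}\}$. By Proposition \ref{proposition-linkage} applied with $d_1 = d$ and $d_2 = \frac{d}{2}$, the set ${\bf W}_1$ is a $(n_1 + n_2, N_1 N_2, d, k)_q$ code, and by its parallel version ${\bf W}_2$ is a $(n_1 + n_2, N_3 N_4, d, k)_q$ code; in particular each already has internal minimum distance at least $d$ and the asserted cardinality. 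The whole statement therefore reduces to two claims: (i) every codeword of ${\bf W}_1$ is at subspace distance at least $d$ from every codeword of ${\bf W}_2$, and (ii) ${\bf W}_1 \cap {\bf W}_2 = \emptyset$, so that $|C| = N_1 N_2 + N_3 N_4$. I would observe that (ii) is immediate from (i), since distance at least $d > 0$ prevents two subspaces from coinciding, and the minimum distance of $C$ is then the minimum of the internal distances and the cross distance, namely $d$.

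The heart of the argument is (i). Fix $w_1 = \operatorname{Im}(U \mid Q_1) \in {\bf W}_1$ and $w_2 = \operatorname{Im}(Q_2 \mid V) \in {\bf W}_2$, where $U$ and $V$ have rank $k$ and $\rank(Q_2) \le k - \frac{d}{2}$. Both subspaces have dimension $k$, so $d(w_1, w_2) = 2k - 2\dim(w_1 \cap w_2)$, and it suffices to prove $\dim(w_1 \cap w_2) \le k - \frac{d}{2}$. A vector of $w_1 \cap w_2$ is simultaneously of the form $(aU \mid aQ_1)$ and $(bQ_2 \mid bV)$ for row vectors $a, b \in \mathbb{F}_q^k$; since $U$ and $V$ have rank $k$, the two parametrizations $a \mapsto (aU \mid aQ_1)$ and $b \mapsto (bQ_2 \mid bV)$ are injective, so $w_1 \cap w_2$ is linearly isomorphic to the solution space $S = \{(a,b) : aU = bQ_2,\ aQ_1 = bV\}$, equivalently to the left kernel of the block matrix $\begin{pmatrix} U & Q_1 \\ Q_2 & V \end{pmatrix}$.

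I would then bound $\dim S$ by projecting onto the first coordinate $a$. For a fixed $a$, the set of partners $b$ with $(a,b) \in S$ is a translate of $\{b : bQ_2 = 0,\ bV = 0\}$, which is trivial because $bV = 0$ already forces $b = 0$ (as $V$ has rank $k$); hence each admissible $a$ has at most one partner, and $\dim S$ equals the dimension of $\{a : (a,b) \in S \text{ for some } b\}$. Every such $a$ satisfies $aU = bQ_2$, so $aU$ lies in the row space of $Q_2$, a space of dimension $\rank(Q_2) \le k - \frac{d}{2}$; as $a \mapsto aU$ is injective, the set of admissible $a$ has dimension at most $\rank(Q_2)$. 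This yields $\dim(w_1 \cap w_2) = \dim S \le k - \frac{d}{2}$ and hence $d(w_1, w_2) \ge d$, proving (i).

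The decisive step, and the only place the hypothesis $\rank(Q_2) \le k - \frac{d}{2}$ enters, is this final bound: the full rank of $U$ and $V$ collapses the fibers of the projection and identifies the intersection with a subspace of the row space of $Q_2$, whose dimension is exactly what the rank restriction controls. I expect the remaining difficulties to be purely presentational, namely keeping the row/column conventions for the concatenated matrices consistent, and checking that passing to the low-rank subcode ${\bf Q}_2$ does not damage its rank distance $\frac{d}{2}$ (a subcode of a rank-metric code retains at least its minimum rank distance), so that the parallel version of Proposition \ref{proposition-linkage} still applies to give ${\bf W}_2$ its internal distance $d$.
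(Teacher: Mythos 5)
Your proof is correct and takes essentially the same route as the paper: both arguments reduce the claim to the cross-distance bound $\dim(W_1\cap W_2)\le k-\frac{d}{2}$, which is exactly where the rank restriction $\rank(Q_2)\le k-\frac{d}{2}$ enters. The paper phrases this dually, row-reducing the stacked $2k\times(n_1+n_2)$ matrix to show $\rank\ge k+\frac{d}{2}$, whereas your kernel/projection argument bounds the intersection directly; the two are equivalent, and yours is if anything the cleaner write-up (the paper's version normalizes $U_{11}$ to contain an identity block and in its last inequality bounds by $\rank(Q_{12})$ where it should invoke the rank of the block coming from $Q_{21}\in{\bf Q}_2$).
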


\begin{proof}Consider the code  ${\bf C}=\{\im(U_{11}|Q_{12}): U_{11} \in {\bf U}, Q_{12} \in {\bf Q}_1\} \cup \{\im(Q_{21}|U_{22}): Q_{21} \in  {\bf Q_2}, U_{22} \in {\bf V}\}.$
From the Proposition \ref{proposition-linkage}, we know that ${\bf W}_1=\{\im(U_{11}|Q_{12}): U_{11} \in {\bf U}, Q_{12} \in {\bf Q}_1\}$ 
and ${\bf W}_2=\{\im(Q_{21}|U_{22}): Q_{22} \in {\bf Q}_2, U_{22} \in {\bf V}\}$ are two parallel versions of linkage construction. Therefore, these two codes are disjoint.
We need to prove that the subspace distance between $W_1 \in {\bf W}_1$ and $W_2 \in {\bf W}_2$ is at least $d$. \\

It is sufficient to prove that
$$dim(W_1+W_2)=  rank\left(
\begin{array}{cccc}
U_{11} &  Q_{12} \\
Q_{21}& U_{22}
\end{array} \right) \ge k+\frac{d}{2}$$

We can exchange columns in the first $n_1$ columns to make  the front $k$ columns in $U_{11}$  be a $k\times k$ unit matrix $E_k$:   $U_{11}=\{E_k, U_{11}'\}$, $U_{11}'$ is a matrix with $(n_1-k)\times k$.
In the meanwhile, $Q_{21}$ will be transformed to $Q_{21}'=\{Q_{211}, Q_{212}\}$,  where $Q_{211}$ is a matrix with $k\times k$, and $Q_{212}$ is a matrix with $(n_1-k)\times k$. Then,

$$dim(W_1+W_2)=  rank\left(
\begin{array}{cccc}
E_k & U_{11}' &  Q_{12} \\
Q_{211} & Q_{212} & U_{22}
\end{array} \right) $$

The above formula can be transformed into the following  by subtracting first row multiplied by $Q_{211}$:

$dim(W_1+W_2) $
$$= rank\left(
\begin{array}{cccc}
E_k & U_{11}' &  Q_{12} \\
0 & Q_{212}-Q_{211}\times U_{11}' & U_{22}-Q_{211}\times Q_{12}
\end{array} \right) $$

Consider that $rank(U_{22}-Q_{211}\times Q_{12}) \ge rank(U_{22})-rank(Q_{211}\times Q_{12}) \ge rank(U_{22})-rank(Q_{12})=k-(k-\frac{d}{2})=\frac{d}{2}$.
The conclusion is proved.\\

\end{proof}

We utilize  theorem \ref{them-delsarte}, and   give a concrete calculation formula  of the theorem  \ref{theo-main} in the following corollary.

\begin{corollary}\label{coro-main}
If $k\ge d, n_1\ge k, n_2\ge k$, then we have $A_q(n_1+n_2,k,d)\ge |Q_q(n_1,k,\frac{d}{2})| \times A_q(n_2,k,d)$+$A_q(n_1, k,d)  \times (1+\sum_{r=\frac{d}{2}}^{k-\frac{d}{2}} A_r(Q_q(n_2,k,\frac{d}{2}))).$
\end{corollary}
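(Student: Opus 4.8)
The plan is to obtain the corollary as the optimized numerical specialization of Theorem~\ref{theo-main}: choose every free ingredient as large as the hypotheses permit, and use the Delsarte formula of Theorem~\ref{them-delsarte} to evaluate the single ingredient that carries a rank restriction. The two summands of the claimed bound are exactly the sizes of the two parallel codes ${\bf W}_1$ and ${\bf W}_2$ produced by the theorem, so the whole task reduces to recording the maximal admissible size of each of the four building blocks and verifying that the restricted block is still legitimate.

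For the constant dimension factors I would take ${\bf U}$ to be a SC-representation of an optimal $(n_1,A_q(n_1,k,d),d,k)_q$ code and ${\bf V}$ to be a SC-representation of an optimal $(n_2,A_q(n_2,k,d),d,k)_q$ code; these exist by the very definition of $A_q$, and the hypotheses $n_1\ge k$, $n_2\ge k$ guarantee the parameters are meaningful. For the rank-metric factors I would take one block to be the \emph{entire} MRD code $Q_q(n_1,k,\frac{d}{2})$, contributing the factor $|Q_q(n_1,k,\frac{d}{2})|$, and arrange it opposite ${\bf V}$ so that this product realizes the first summand $|Q_q(n_1,k,\frac{d}{2})|\times A_q(n_2,k,d)$.

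The second summand comes from the rank-restricted block, which I would take to be the set of all matrices of rank at most $k-\frac{d}{2}$ inside the MRD code $Q_q(n_2,k,\frac{d}{2})$, placed opposite ${\bf U}$. Two things must be checked. First, this set is admissible as the rank-restricted code required by Theorem~\ref{theo-main}: as a subset of an MRD code of minimum rank distance $\frac{d}{2}$ it still has rank distance at least $\frac{d}{2}$, and every element satisfies the rank bound by construction. Second, I would count it: an MRD code of minimum rank distance $\frac{d}{2}$ has no nonzero element of rank smaller than $\frac{d}{2}$, so the only rank value below $\frac{d}{2}$ that occurs is $0$ (the zero matrix). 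Hence the number of its elements of rank at most $k-\frac{d}{2}$ equals $1+\sum_{r=\frac{d}{2}}^{k-\frac{d}{2}}A_r(Q_q(n_2,k,\frac{d}{2}))$, with each $A_r$ read off directly from Theorem~\ref{them-delsarte}. Multiplying by $A_q(n_1,k,d)$ gives the second summand, and adding the two products gives the stated inequality.

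The only delicate point is the bookkeeping behind this last count: one must justify that the summation index starts at $\frac{d}{2}$ rather than $1$ (no intermediate ranks $1,\dots,\frac{d}{2}-1$ occur in an MRD code of minimum distance $\frac{d}{2}$) and that the isolated $+1$ for the zero matrix is included exactly once. This is precisely where $k\ge d$ is used, since it forces $\frac{d}{2}\le k-\frac{d}{2}$ so that the restricted stratum is nonempty, while $n_1\ge k$ and $n_2\ge k$ put each $Q_q(n_i,k,\frac{d}{2})$ into the regime $m\ge n$ demanded by Theorem~\ref{them-delsarte} and ensure the underlying MRD codes exist. No estimate beyond Theorems~\ref{theo-main} and~\ref{them-delsarte} is required.
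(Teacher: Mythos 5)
Your proposal is correct and is essentially the derivation the paper intends: the corollary is stated there without its own proof, as a direct specialization of Theorem~\ref{theo-main} in which the CDC ingredients are taken optimal, one rank-metric block is the full MRD code, and the rank-restricted block is counted via Theorem~\ref{them-delsarte}, with linearity and minimum rank distance $\frac{d}{2}$ giving exactly the $1+\sum_{r=d/2}^{k-d/2}A_r$ bookkeeping you describe. The only point worth flagging is that you (correctly) apply the theorem with the roles of $n_1$ and $n_2$ interchanged relative to its literal statement so as to match the corollary's formula, which is harmless since the hypotheses are symmetric in $n_1$ and $n_2$.
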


The following example is used to illustrate the corollary \ref{coro-main}.

\begin{example}
Let $k=4, d=4, n_1$ is fixed as 8, and $n_2$ varies from 4 to 11, from corollary \ref{coro-main}, we have
 $A_q(n_1+n_2, 4,4) \ge A_q(8,4,4) \times |Q_q(n_2,4,2)|+(1+A_2(Q_q(8,4,2)))\times A_q(n_2,4,4).$
When $n_2=4$, $A_q(12,4,4)\ge A_q(8,4,4) \times |Q_q(4,4,2)|+1+A_2(Q_q(8,4,2))$, here $A_q(4,4,4)$ will be degenerated to $4\times 4$ identity matrix $I_4$.
Assume that $q=2$, $A_2(12,4,4)\ge A_2(8,4,4) \times |Q_2(4,4,2)|+A_2(Q_2(8,4,2))=4801\times 4096+8925=19673822$.
\end{example}

From the proof of the theorem \ref{theo-main}, we notice that  ${\bf Q}_2 \subset Q_q(n_2+t,k,\frac{d}{2})$(where $ 0\le t \le n_1-k $) is a code with rank distance $\frac{d}{2}$ such that the rank of each element in ${\bf Q}_2$ is at most $k-\frac{d}{2}$ ,
the theorem is still true.   More general, we have the following theorem.

\begin{theorem}
\label{theo-main-2}
If $k\ge d, n_1\ge k, n_2\ge k, 0\le t \le n_1-k $, then we have $A_q(n_1+n_2,k,d)\ge |Q_q(n_1,k,\frac{d}{2})| \times A_q(n_2,k,d)+A_q(n_1-t, k,d)  \times (1+ \sum_{r=\frac{d}{2}}^{k-\frac{d}{2}} A_r(Q_q(n_2+t,k,\frac{d}{2}))) .$
\end{theorem}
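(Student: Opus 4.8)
The plan is to replay the rank computation from the proof of Theorem~\ref{theo-main}, but to place the column split of the two parallel linkage codes at two \emph{different} positions and to track how the parameter $t$ moves columns from the constant-dimension block into the rank-restricted MRD block of the second code. Concretely, I would take ${\bf W}_1$ to be the linkage code built from a full MRD code ${\bf Q}_1 \subset Q_q(n_1,k,\frac{d}{2})$ in the first $n_1$ columns and an SC-representation ${\bf U}$ of an $(n_2,N,d,k)_q$ CDC with $N=A_q(n_2,k,d)$ in the last $n_2$ columns; and ${\bf W}_2$ to be the linkage code built from an SC-representation ${\bf V}$ of an $(n_1-t,N',d,k)_q$ CDC with $N'=A_q(n_1-t,k,d)$ in the first $n_1-t$ columns and a rank-restricted code ${\bf Q}_2 \subset Q_q(n_2+t,k,\frac{d}{2})$, every element of rank at most $k-\frac{d}{2}$, in the last $n_2+t$ columns. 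Since $(n_1-t)+(n_2+t)=n_1+n_2$, both families live in $\mathcal{G}_q(k,n_1+n_2)$. Each family has internal distance at least $d$ by Proposition~\ref{proposition-linkage}; for ${\bf W}_2$ this uses that ${\bf Q}_2$, being a subcode of an MRD code, still has rank distance $\frac{d}{2}$. Counting gives $|{\bf W}_1|=|Q_q(n_1,k,\frac{d}{2})|\,A_q(n_2,k,d)$ and $|{\bf W}_2|=A_q(n_1-t,k,d)\bigl(1+\sum_{r=d/2}^{k-d/2}A_r(Q_q(n_2+t,k,\frac{d}{2}))\bigr)$, the $1$ accounting for the zero matrix.

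The heart of the argument is the cross distance. For $W_1=\im(Q_1\mid U)\in{\bf W}_1$ and $W_2=\im(V\mid Q_2)\in{\bf W}_2$ I would partition the $n_1+n_2$ columns into three blocks of widths $n_1-t$, $t$, $n_2$, writing $Q_1=(Q_1^A\mid Q_1^B)$ and $Q_2=(Q_2^B\mid Q_2^C)$, so that $\dim(W_1+W_2)$ equals the rank of the stacked matrix $\left(\begin{smallmatrix}Q_1^A & Q_1^B & U\\ V & Q_2^B & Q_2^C\end{smallmatrix}\right)$, and I would show this rank is at least $k+\frac{d}{2}$. Exactly as in Theorem~\ref{theo-main}, I use the full-rank CDC block $U$ (which has width $n_2\ge k$) to install an identity $E_k$ in the last block by column operations, turning $Q_2^C$ into $(Q_2^{C1}\mid Q_2^{C2})$, and then clear $Q_2^{C1}$ by subtracting $Q_2^{C1}$ times the top row. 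The remaining rank is $k$ (from $E_k$) plus the rank of a residual bottom row whose first block is $V-Q_2^{C1}Q_1^A$. Because $\rank(V)=k$ while $Q_2^{C1}$ is a $k\times k$ block of an element of ${\bf Q}_2$ and hence $\rank(Q_2^{C1})\le k-\frac{d}{2}$, subadditivity of rank gives $\rank(V-Q_2^{C1}Q_1^A)\ge k-(k-\frac{d}{2})=\frac{d}{2}$. Hence the total rank is at least $k+\frac{d}{2}$ and $d(W_1,W_2)\ge d$. Positivity of this cross distance also forces ${\bf W}_1\cap{\bf W}_2=\emptyset$, so the two sizes add and give the stated bound.

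I expect the main obstacle to be the bookkeeping of the three-block partition together with the choice of \emph{which} constant-dimension block to reduce against: the elimination must be driven by the block $U$ lying opposite the rank-restricted code $Q_2$, so that it is precisely $\rank(Q_2^{C1})\le k-\frac{d}{2}$ that controls the residual rank. The hypothesis $t\le n_1-k$ enters exactly here, since it guarantees that the left CDC block $V$ has width $n_1-t\ge k$ and therefore $\rank(V)=k$; if $n_1-t<k$ the block $V$ could not have full row rank and the bound $\frac{d}{2}$ would collapse. The remaining ingredients---that the restricted code retains rank distance $\frac{d}{2}$, that $1+\sum_{r=d/2}^{k-d/2}A_r(Q_q(n_2+t,k,\frac{d}{2}))$ counts exactly the admissible MRD matrices via Theorem~\ref{them-delsarte}, and that linkage secures the intra-family distances---are routine, so I would concentrate the effort on verifying the reduction above and its dependence on the admissible range of $t$.
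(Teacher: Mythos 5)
Your proposal is correct and follows essentially the same route as the paper: two parallel linkage codes, one carrying a full MRD block and the other a rank-restricted block of rank at most $k-\frac{d}{2}$ placed opposite the full-rank CDC block, with the cross distance controlled by installing an identity submatrix in that CDC block and invoking $\rank(V-Q_2^{C1}Q_1^A)\ge \rank(V)-\rank(Q_2^{C1})\ge \frac{d}{2}$. Your explicit three-block rank reduction is just the computation of Theorem~\ref{theo-main} redone with the shifted column split, and your bookkeeping of which block has which width (so that the rank-restricted block of width $n_2+t$ fully covers the columns of the opposite CDC block, and $t\le n_1-k$ keeps $V$ of full rank) is in fact cleaner than the paper's own write-up, which reaches the same bound via $\dim(W_1\cap W_2)\le k-\frac{d}{2}$.
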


\begin{proof}
Suppose  $U$ and $V$ be  two SC-representations of  $(n_1, N_1, d, k)_q$ and $(n_2-t, N_3, d, k)_q$ constant dimension codes, respectively. 
Let ${Q}_1 \subset Q_q(n_2,k,\frac{d}{2})$ be a code with rank distance $\frac{d}{2}$,  
 ${Q}_2 \subset Q_q(n_2+t,k,\frac{d}{2})$ be a code with rank distance $\frac{d}{2}$ and the rank of each element in ${\bf Q}_2$ is at most $k-\frac{d}{2}$, where $0\le t \le n_1-k$.

We define the code as ${\bf C}=\{\im(U_{11}|Q_{12}): U_{11} \in {\bf U}, Q_{12} \in {\bf Q}_1\} \cup \{\im(Q_{21}|U_{22}): Q_{21} \in  \bf{Q_2}, U_{22} \in {\bf V}\}.$

Consider that ${\bf W}_1$ and ${\bf W}_2$ are two versions of linkage construction, therefore, it is clear that  the subspace distances of these two codes ${\bf W}_1=\{\im(U_{11}|Q_{12}): U_{11} \in {\bf U}, Q_{12} \in {\bf Q}_1\}$
 and ${\bf W}_2=\{\im(Q_{21}|U_{22}): Q_{22} \in {\bf Q}_2, U_{22} \in {\bf V}\}$  themselves are at least $d$. 
Hence, we need to prove that the subspace distance between $W_1 \in {\bf W}_1$ and $W_2 \in {\bf W}_2$ is at least $d$. \\

Consider that the dimension of $W_1\cap W_2=\{x(U_{11}|Q_{12})=y(Q_{21}|U_{22}),x,y\in \bf{F}_q^k\}$, hence, $xU_{11}=yQ_{21}'$, $Q_{21}'$  is  the first $n_1$ column of the  matrix $Q_{21}$.
We notice that there exists an identity matrix $E$ with size $k\times k$ in  $U_{11}$. No matter where this identity matrix the position is,  
we always have that the dimension of the subspace  $\{x: \exists y, xE=yQ' \}$ is  at most the rank of the matrix $Q'$, that is $k-\frac{d}{2}$, the matrix $Q'$ is the corresponding matrix of $E$ in $Q_{21}$.  Then

$$d(W_1,W_2)\ge 2k-2(k-\frac{d}{2})=d.$$

This completes the proof.

\end{proof}

Now,we present some examples to illustrate the theorem \ref{theo-main-2}.

When $n_1=13, k=6, d=6, n_2=6, t=1$, we have $A_q(19,6,6)\ge A_q(12,6,6)\times |Q_q(6,6,3)|+(1+A_2(Q_q(13,6,3))$.
Assume that $q=2$,  we have $A_q(19,6,6)\ge 16865630\times 2^{15}+1+11426445=4527333091203726$.
This bound is strictly improves upon the corresponding results in \cite{generalized2019heinlein,HaoXJL2019,heinlein2017asymptotic,combining2019Antonio}.

When $n_1=8, k=4, d=4, n_2=5, t=1$, we have $A_q(13,4,4)\ge A_q(8,4,4)\times |Q_q(5,4,2)|+(1+A_2(Q_q(9,4,2))$.
Assume that $q=2$,  we have $A_q(13,4,4)\ge 4801\times 2^{15}+1+17885=157337054$,  which exceeds the current best theoretic bound 157332190.

When $n_1=12, k=4, d=4, n_2=5, t=1$, we have $A_q(17,4,4)\ge A_q(12,4,4)\times |Q_q(5,4,2)|+(1+A_2(Q_q(13,4,2))$.
Assume that $q=2$,  we have $A_q(17,4,4)\ge 19676797\times 2^{15}+1+286685=644769570782$, which exceeds the current best theoretic bound 644769492958.

These new bounds exceed the current theoretic  bounds in \cite{generalized2019heinlein,HaoXJL2019,heinlein2017asymptotic}, even the latest improvements in \cite{combining2019Antonio}.
\cite{lifted2020kurz} gives some new algorithmic bounds in the case $d=4$.

\subsection{On the case $d > k$}
The drawback of the construction in theorem \ref{theo-main-2} and theorem \ref{theo-main} is that it is only applicable for the case $d \le  k$.  
If we use the notation of the rank-restricted rank-metric code (RRMC)\cite{generalized2019heinlein}, we can construct CDCs for $d > k$. 

\begin{definition}[\cite{generalized2019heinlein}]
A rank-metric code (RMC) is a subset of $\mathbb{F}_q^{m \times n}$ with cardinality $N$ such that the rank distance $d_r(A,B)\ge d$, for all $A \neq B$.
Additionally, if the rank of each codeword is at least $u$, we use the notation $(m \times n,N,d;u)_q$ to denote it and call it rank-restricted RMC (RRMC).
\end{definition}

The maximum size of an $(m \times n,N,d;u)_q$ RRMC is denoted as $\Lambda(q,m,n,d,u)$.  With this, we have the following corollary. 

\begin{corollary}
\label{coro-3}
If $d \ge k, n_1\ge k, n_2\ge k, 0\le t \le n_1-k $, then we have $A_q(n_1+n_2,k,d)\ge |Q_q(n_1,k,\frac{d}{2})| \times A_q(n_2,k,d)+A_q(n_1-t, k,d)  \times  \Lambda(q,k,n_2+t,\frac{d}{2},k-\frac{d}{2}).$
\end{corollary}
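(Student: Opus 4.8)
The plan is to keep the two-parallel-versions construction of Theorem~\ref{theo-main-2} completely intact and to change only the rank-metric ingredient feeding the second family, so that the degeneracy arising when $d>k$ is side-stepped. Concretely, I would form the code $\mathbf{C}=\mathbf{W}_1\cup\mathbf{W}_2$ with
$$\mathbf{W}_1=\{\im(U_{11}\mid Q_{12}):U_{11}\in\mathbf{U},\,Q_{12}\in\mathbf{Q}_1\},\quad \mathbf{W}_2=\{\im(Q_{21}\mid U_{22}):Q_{21}\in\mathbf{Q}_2,\,U_{22}\in\mathbf{V}\},$$
keeping a full MRD code $\mathbf{Q}_1$ of size $|Q_q(n_1,k,\frac{d}{2})|$, an SC-representation $\mathbf{U}$ of an optimal $(n_2,A_q(n_2,k,d),d,k)_q$ code for the first family, and an SC-representation $\mathbf{V}$ of an optimal $(n_1-t,A_q(n_1-t,k,d),d,k)_q$ code for the second. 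The single modification is that $\mathbf{Q}_2$ is now taken to be a maximum rank-metric code in $\mathbb{F}_q^{k\times(n_2+t)}$ of rank distance $\frac{d}{2}$ all of whose codewords have rank at most $k-\frac{d}{2}$; its maximum cardinality is recorded as $\Lambda(q,k,n_2+t,\frac{d}{2},k-\frac{d}{2})$. Multiplying the two factors in each family and adding then produces the stated cardinality, provided the two families are disjoint.

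For the distance analysis I would split into the usual three cases. Two codewords lying in the same family are at distance at least $d$ by Proposition~\ref{proposition-linkage}, since each family is an ordinary linkage code built on a rank-metric code of rank distance $\frac{d}{2}$, which gives subspace distance $\min\{d,2\cdot\frac{d}{2}\}=d$; the rank ceiling imposed on $\mathbf{Q}_2$ is irrelevant for this comparison. Disjointness of $\mathbf{W}_1$ and $\mathbf{W}_2$ is inherited verbatim from the argument for Theorem~\ref{theo-main}, as they are the two parallel forms of the linkage construction. Hence the only genuinely new point is the cross distance between some $W_1\in\mathbf{W}_1$ and $W_2\in\mathbf{W}_2$.

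For that cross distance I would rerun the row reduction from the proof of Theorem~\ref{theo-main} without change: normalizing $U_{11}=(E_k\mid U_{11}')$ and splitting $Q_{21}=(Q_{211}\mid Q_{212})$, clearing the identity block gives $\dim(W_1+W_2)\ge k+\rank(U_{22}-Q_{211}Q_{12})$, and then $\rank(U_{22}-Q_{211}Q_{12})\ge \rank(U_{22})-\rank(Q_{211})\ge k-(k-\frac{d}{2})=\frac{d}{2}$. This chain uses only that $U_{22}$ has full rank $k$ and that the $k\times k$ submatrix $Q_{211}$ of $Q_{21}\in\mathbf{Q}_2$ has rank at most $k-\frac{d}{2}$, which is exactly the defining restriction of the RRMC. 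Because no step in it refers to the relative size of $d$ and $k$, the estimate survives for $d>k$, giving $\dim(W_1+W_2)\ge k+\frac{d}{2}$ and hence $d(W_1,W_2)\ge d$.

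The step carrying the real weight is precisely this transfer of the rank restriction. For $d\le k$ one may realize the required low-rank code as the set of codewords of rank at most $k-\frac{d}{2}$ inside an MRD code, which is how Theorem~\ref{theo-main-2} obtains the factor $1+\sum_{r=\frac{d}{2}}^{k-\frac{d}{2}}A_r(Q_q(n_2+t,k,\frac{d}{2}))$; but once $d>k$ one has $k-\frac{d}{2}<\frac{d}{2}$, so every nonzero codeword of such an MRD code already exceeds the rank ceiling and that factor collapses to $1$. The whole content of the corollary is the observation that the cross-distance estimate never required an MRD code, only a rank-metric code meeting the rank-distance and rank-ceiling constraints, so the collapsed MRD count may be replaced by the honestly larger maximum $\Lambda(q,k,n_2+t,\frac{d}{2},k-\frac{d}{2})$. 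I therefore expect the only subtlety to be confirming that a nontrivial such RRMC exists in the parameter ranges of interest, so that the new term strictly improves on the trivial value $1$; the construction and all three distance checks are otherwise identical to Theorem~\ref{theo-main-2} and need no further computation.
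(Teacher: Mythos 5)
Your proposal is correct and follows essentially the same route as the paper: keep the two parallel linkage families of Theorem~\ref{theo-main-2} verbatim and replace the low-rank subcode of an MRD code by a maximum RRMC, with the cross-distance secured by the rank ceiling $k-\frac{d}{2}$ on the block of $Q_{21}$ facing the identity block of $U_{11}$. Your write-up is in fact slightly more careful than the paper's (which only sketches the intersection-dimension argument and, in Theorem~\ref{theo-main}, bounds by $\rank(Q_{12})$ where $\rank(Q_{211})$ is the quantity actually controlled), but the underlying argument is the same.
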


\begin{proof}
We define the code as ${\bf C}=\{\im(U_{11}|Q_{12}): U_{11} \in {\bf U}, Q_{12} \in {\bf Q}_1\} \cup \{\im(Q_{21}|U_{22}): Q_{21} \in  {\bf Q_2}, U_{22} \in {\bf V}\}.$
Let $U$ and $V$ be two SC-representations of  $(n_1, N_1, d, k)_q$ and $(n_2-t, N_3, d, k)_q$ CDCs, respectively. 
Let ${Q}_1 \subset Q_q(n_2,k,\frac{d}{2})$ be a code with rank distance $\frac{d}{2}$, ${Q}_2 \subset ((n_2+t)\times k,N,\frac{d}{2};k-\frac{d}{2})_q$ be a rank-restricted RMC code with rank distance $\frac{d}{2}$ such that the rank of each element in ${\bf Q}_2$ is at most $k-\frac{d}{2}$, where $0\le t \le n_1-k$.
Similar to the proof of theorem \ref{theo-main-2},
It is clear that there exist an identity matrix with the size $k\times k$ in $U_{11}$, and the rank of the corresponding matrix in $Q_{21}$ is at most  $k-\frac{d}{2}$. 
Therefore, the dimension of  $U\cap V$ is at most $k-\frac{d}{2}$.
\end{proof}

Remark: Independent to this paper, Heinlein proposed a variation of the generalized linkage
construction \cite{generalized2019heinlein}.  When $t=0$, the two constructions are exactly the same. 
When $t>0$, the second part of the lower bound is different. 
In addition, the generalized linkage construction \cite{generalized2019heinlein} will degenerate into the improved\_linkage \cite{heinlein2017asymptotic} when $t>0$.

\section{Conclusion}
In this paper, we propose a new construction for constant dimension codes from two parallel versions of linkage construction, and its variants.  
 This construction gives an improved bounds for linkage construction when $k\ge d$.
 In addition, the notation of the rank-restricted rank-metric code (RRMC) is applied to construct bounds for $d\ge k$.  
 We have improved at least the following lower bounds: $A_q(13,4,4), A_q(17,4,4), A_q(19,6,6)$ , and the expression of these bounds are also given. 
 All these theoretic bounds exceeds the bounds presented in \cite{HaoXJL2019,heinlein2017asymptotic,combining2019Antonio,generalized2019heinlein}.


\begin{thebibliography}{10}

\bibitem{HaoXJL2019}
Hao Chen, Xianmang He, Jian Weng, and Liqing Xu.
\newblock New constructions of subspace codes using subsets of {MRD} codes in
  several blocks.
\newblock {\em IEEE Transactions on Information Theory (acceted)},
  abs/1908.03804, 2019.

\bibitem{combining2019Antonio}
Antonio Cossidente, Sascha Kurz, Giuseppe Marino, and Francesco Pavese.
\newblock Combining subspace codes.
\newblock {\em arXiv:1911.03387,https://arxiv.org/abs/1911.03387}, 2019.

\bibitem{Cossidente2014Subspace}
Antonio Cossidente and Francesco Pavese.
\newblock Subspace codes in pg(2n -1, q).
\newblock {\em Combinatorica}, 60(1):1--23, 2014.

\bibitem{Cruz2015Rank}
Javier De~La Cruz, Elisa Gorla, Hiram~H. Lopez, and Alberto Ravagnani.
\newblock Rank distribution of delsarte codes.
\newblock {\em Mathematics}, 2015.

\bibitem{Delsarte1978Bilinear}
Ph. Delsarte.
\newblock Bilinear forms over a finite field, with applications to coding
  theory.
\newblock {\em Journal of Combinatorial Theory}, 25(3):226--241, 1978.

\bibitem{etzion2013problems}
Tuvi Etzion.
\newblock Problems on q-analogs in coding theory.
\newblock {\em Computer Science}, 2013.

\bibitem{etzion2009error}
Tuvi Etzion and Natalia Silberstein.
\newblock Error-correcting codes in projective spaces via rank-metric codes and
  ferrers diagrams.
\newblock {\em IEEE Transactions on Information Theory}, 55(7):2909--2919,
  2009.

\bibitem{multicomponent}
E.~M. Gabidulin and N.~I. Pilipchuk.
\newblock Rank subcodes in multicomponent network coding.
\newblock {\em Problems of Information Transmission}, 49(1):40--53, 2013.

\bibitem{gabidulin2011new}
Ernst~M Gabidulin and Nina~I Pilipchuk.
\newblock New multicomponent network codes based on block designs.
\newblock {\em Proc. Int. Mathematical Conf.¡°50 years of IITP}, 2011.

\bibitem{Heide2016linkage}
Heide Gluesing-Luerssen and Carolyn Troha.
\newblock Construction of subspace codes through linkage.
\newblock {\em \em Advances in Mathematics of Communications}, 10(3):525--540,
  2016.

\bibitem{generalized2019heinlein}
Daniel Heinlein.
\newblock Generalized linkage construction for constant-dimension codes.
\newblock {\em arXiv:1910.11195,https://arxiv.org/abs/1910.11195}, 2019.

\bibitem{Heinlein2017Classifying}
Daniel Heinlein, Thomas Honold, Michael Kiermaier, Sascha Kurz, and Alfred
  Wassermann.
\newblock Classifying optimal binary subspace codes of length 8, constant
  dimension 4 and minimum distance 6.
\newblock {\em Design,code and cryptography}, 87(2-3):375¨C391, 2019.

\bibitem{Heinlein2016Tables}
Daniel Heinlein, Michael Kiermaier, Sascha Kurz, and Alfred Wassermann.
\newblock Tables of subspace codes.
\newblock {\em arxiv.org}, 2016.

\bibitem{heinlein2017asymptotic}
Daniel Heinlein and Sascha Kurz.
\newblock Asymptotic bounds for the sizes of constant dimension codes and an
  improved lower bound.
\newblock {\em Coding Theory and Applications. ICMCTA 2017}, 2017.

\bibitem{Honold2013Optimal}
Thomas Honold, Michael Kiermaier, and Sascha Kurz.
\newblock Optimal binary subspace codes of length 6, constant dimension 3 and
  minimum distance 4.
\newblock {\em Mathematics}, 2013.

\bibitem{koetter2008coding}
Ralf Koetter and Frank~R. Kschischang.
\newblock Coding for errors and erasures in random network coding.
\newblock {\em IEEE Transactions on Information Theory}, 54(8):3579--3591,
  2008.

\bibitem{Kurz2019}
Sascha Kurz.
\newblock A note on the linkage construction for constant dimension codes.
\newblock {\em arXiv:1906.09780, https://arxiv.org/abs/1906.09780}, 2019.

\bibitem{lifted2020kurz}
Sascha Kurz.
\newblock Lifted codes and the multilevel construction for constant dimension
  codes.
\newblock {\em arXiv:2004.14241}, 2020.

\bibitem{FagangLi2019Construction}
Fagang Li.
\newblock Construction of constant dimension subspace codes by modifying
  linkage construction.
\newblock {\em IEEE Transactions on Information Theory}, 2019.

\bibitem{Shishkin2014}
Alexander Shishkin.
\newblock A combined method of constructing multicomponent network codes.
\newblock {\em (in Russian),
  https://mipt.ru//upload/medialibrary/4fe/188-194.pdf}, 6(2):188--194, 2014.

\bibitem{shishkin2014cardinality}
Alexander Shishkin, Ernst~M. Gabidulin, and Nina~I. Pilipchuk.
\newblock On cardinality of network subspace codes.
\newblock {\em Proceeding of the Fourteenth Int. Workshop on Algebraic and
  Combinatorial Coding Theory (ACCT-XIV)}, 7, 2014.

\bibitem{silberstein2014subspace}
Natalia Silberstein and Anna~Lena Trautmann.
\newblock Subspace codes based on graph matchings, ferrers diagrams, and
  pending blocks.
\newblock {\em IEEE Transactions on Information Theory}, 61(7):3937--3953,
  2015.

\bibitem{silva2008rank}
Danilo Silva, Frank~R. Kschischang, and Ralf Koetter.
\newblock A rank-metric approach to error control in random network coding.
\newblock {\em IEEE Transactions on Information Theory}, 54(9):3951--3967,
  2008.

\bibitem{XuChen}
L.~Xu and H.~Chen.
\newblock New constant-dimension subspace codes from maximum rank-distance
  codes.
\newblock {\em IEEE Transactions on Information Theory}, 64(9):6315--6319,
  2018.

\end{thebibliography}

%
%
%

\end{document}